\def\rot{\rotatebox}
    \crefname{figure}{Figure}{Figures}
    \Crefname{figure}{Figure}{Figures}
    \crefname{section}{Section}{Sections}
    \Crefname{section}{Section}{Sections}
    \crefname{appendix}{Appendix}{Appendices}
    \Crefname{appendix}{Appendix}{Appendices}
    \crefname{definition}{Definition}{Definitions}
    \Crefname{definition}{Definition}{Definitions}
    \crefname{notation}{Notation}{Notations}
    \Crefname{notation}{Notation}{Notations}
    \crefname{conjecture}{Conjecture}{Conjectures}
    \Crefname{conjecture}{Conjecture}{Conjectures}
    \crefname{remark}{Remark}{Remarks}
    \Crefname{remark}{Remark}{Remarks}
    \crefname{lemma}{Lemma}{Lemmas}
    \Crefname{lemma}{Lemma}{Lemmas}
    \crefname{lresult}{Result}{Results}
    \Crefname{lresult}{Result}{Results}
    \crefname{theorem}{Theorem}{Theorems}
    \Crefname{theorem}{Theorem}{Theorems}
    \crefname{corollary}{Corollary}{Corollaries}
    \Crefname{corollary}{Corollary}{Corollaries}
    \crefname{equation}{Equation}{Equations}
    \Crefname{equation}{Equation}{Equations}
    \crefname{algorithm}{Algorithm}{Algorithms}
    \Crefname{algorithm}{Algorithm}{Algorithms}
    \Crefname{line}{Line}{Lines}
    \crefname{line}{Line}{Lines}   % This changes the look of definitions as well..
\NewDocumentCommand\rp{m}{{\left(#1\right)}}
\NewDocumentCommand\qp{m}{{\left[#1\right]}}
\NewDocumentCommand\ap{m}{{\left\langle#1\right\rangle}}
\NewDocumentCommand\equivalent{o}{
    \IfValueTF{#1} 
    {{\overset{#1}{\sim}}}
    {{\sim}}
}
\NewDocumentCommand\sign{o}{
    \IfValueTF{#1}
    {{\text{sign}\rp{#1}}}
    {{\text{sign}}}
}
\NewDocumentCommand\pagoda{m}{
    \savestack{\tmpbox}{\stretchto{
        \scaleto{
            \scalerel*[\widthof{\ensuremath{#1}}]{\kern+1.5pt\bigwedge\kern+1.5pt}
            {\rule[-\textheight/2]{1ex}{\textheight}}
        }{\textheight}
    }{0.5ex}}
    \stackon[1pt]{#1}{\tmpbox}
}
\newcommand\xleftrightarrow[2][]{%
  \ext@arrow 9999{\longleftrightarrowfill@}{#1}{#2}}
\newcommand\longleftrightarrowfill@{%
  \arrowfill@\leftarrow\relbar\rightarrow}
\NewDocumentCommand\probability{}{\mathcal{P}}
\NewDocumentCommand\probabilityfunc{om}{
    \IfValueTF{#1}
    {{\probability#1\qp{#2}}}
    {{\probability\qp{#2}}}
}
\NewDocumentCommand\powerset{om}{
    \IfValueTF{#1}
    {{\mathbb{P}^{#1}\rp{#2}}}
    {{\mathbb{P}\rp{#2}}}
}
\NewDocumentCommand\bin{mmo}{
    \IfValueTF{#3}
    {{\text{Bin}\qp{#1, #2}\rp{#3}}}
    {{\text{Bin}\qp{#1, #2}}}
}
\NewDocumentCommand\pois{mo}{
    \IfValueTF{#2}
    {{\text{Poisson}\qp{#1}\rp{#2}}}
    {{\text{Poisson}\qp{#1}}}
}
\NewDocumentCommand\bern{mo}{
    \IfValueTF{#2}
    {{\text{Bern}\qp{#1}\rp{#2}}}
    {{\text{Bern}\qp{#1}}}
}
\NewDocumentCommand\bcoin{}{BroadcastCoin\xspace}
\NewDocumentCommand\apost{}{ConsensusOnDemand\xspace}
\NewDocumentCommand{\event}{mmo}{
    \IfValueTF{#3}
    {{\ap{#1.\textrm{#2} \mid #3}}}
    {{\ap{#1.\textrm{#2}}}}
}
\algnewcommand\Instance[2]{\State #1, \textbf{instance} #2}
\algnewcommand\Instances[2]{\State #1, \textbf{instances} #2}
\algnewcommand\MultipleInstances[1]{\State #1, \textbf{multiple instances}}
\algnewcommand\Exposes[3]{\State #1 \textbf{exposes} #2 \textbf{as} #3}
\algnewcommand\Alias[2]{\State \textbf{use} $#1$ \textbf{as alias for} $#2$}
\algnewcommand\Trigger[3]{\State \textbf{trigger} $\event{#1}{#2}[#3]$}
\algnewcommand\TriggerPure[2]{\State \textbf{trigger} $\event{#1}{#2}$}
\algnewcommand\SetTimeout[2]{\State \textbf{set timeout} $#1$ \textbf{in} $#2$}
\newcommand{\@chapapp}{\relax}%
\begin{document}
\title{Consensus on Demand}
%
%\titlerunning{Abbreviated paper title}
% If the paper title is too long for the running head, you can set
% an abbreviated paper title here

\author{Jakub Sliwinski \and
Yann Vonlanthen \and
Roger Wattenhofer }

%\author{Submission 6460}

\institute{ETH Zürich}
\authorrunning{J. Sliwinski et al.}

% First names are abbreviated in the running head.
% If there are more than two authors, 'et al.' is used.
%
%\institute{ETH Zurich, Switzerland}
%
\maketitle              % typeset the header of the contribution
\begin{abstract}
Digital money can be implemented efficiently by avoiding consensus. However, no-consensus designs are fundamentally limited, as they cannot support general smart contracts, and similarly they cannot deal with conflicting transactions.

We present a novel protocol that combines the benefits of an asynchronous, broadcast-based digital currency, with the capacity to perform consensus. This is achieved by selectively performing consensus a posteriori, i.e., only when absolutely necessary. Our on-demand consensus comes at the price of restricting the Byzantine participants to be less than a one-fifth minority in the system, which is the optimal threshold.

We formally prove the correctness of our system and present an open-source implementation, which inherits many features from the Ethereum ecosystem.

\keywords{Consensus \and Reliable broadcast \and Blockchain \and Fault tolerance \and Cryptocurrency.}
\end{abstract}

\section{Introduction}
Following the famed white paper of Satoshi Nakamoto \cite{bitcoin}, a plethora of digital payment systems (cryptocurrencies) emerged. The basic functionality of such payment systems are money transfer transactions.
These transactions are stored in a distributed ledger, a fault-tolerant and cryptographically secured append-only database. % from which the current state can be derived.
Most cryptocurrencies have a ledger where transactions are \textit{totally ordered}, effectively forcing all participants of the system to perform the state transitions sequentially.
This sequential verification of all transactions is considered the main bottleneck of distributed ledger solutions \cite{MerelyBroadcastig}. 

However, in reality, most transactions have no dependencies between each other. For example, a transaction from Alice to Bob and a transaction from Charlie to Dani can be performed in any order.
Verifying such independent transactions in parallel offers a vast efficiency improvement.
Indeed,  recent research proposes \textit{``no-consensus''} payment systems that do not order independent transactions \cite{mathys2021limitlessly,MerelyBroadcastig}. Such systems can achieve unbounded transaction throughput, as all transactions can be verified in any order, in parallel.

%However, the inability to perform consensus comes at serious functionality costs, as such systems cannot support general smart contracts. Even the rudimentary ability to process payments offered by no-consensus systems is limited in flexibility in comparison to the simplest consensus-based solutions such as Bitcoin.

%For example, if Alice issued a Bitcoin transaction that is being processed by the system for a long time, Alice is free to issue another transaction spending the same funds, for example specifying a higher transaction fee to get the transaction processed faster. The Bitcoin system guarantees that only one of the two transactions will be accepted. In the case of consensus-less systems however, an attempt to ``update'' a transaction while it is being processed is an offense that might lock the funds in a deadlock, as lack of consensus means the system might be unable to decide on one of the two transactions forever.

However, no-consensus payment systems suffer from fundamental limitations, as they lack the means to deal with conflicting inputs: If Charlie sets up two transactions, one for Alice, one for Bob, but Charlie does not have enough funds for both transactions, no-consensus payment system might end up in a deadlock, with Charlie ultimately losing access to her account, and neither Alice nor Bob getting paid. The same problem fundamentally prevents no-consensus systems from supporting general smart contracts, where many uncoordinated parties might issue conflicting inputs to the same smart contract at the same time.

\begin{wrapfigure}{r}{0.5\textwidth}
\centering
\includegraphics[width=0.40\textwidth]{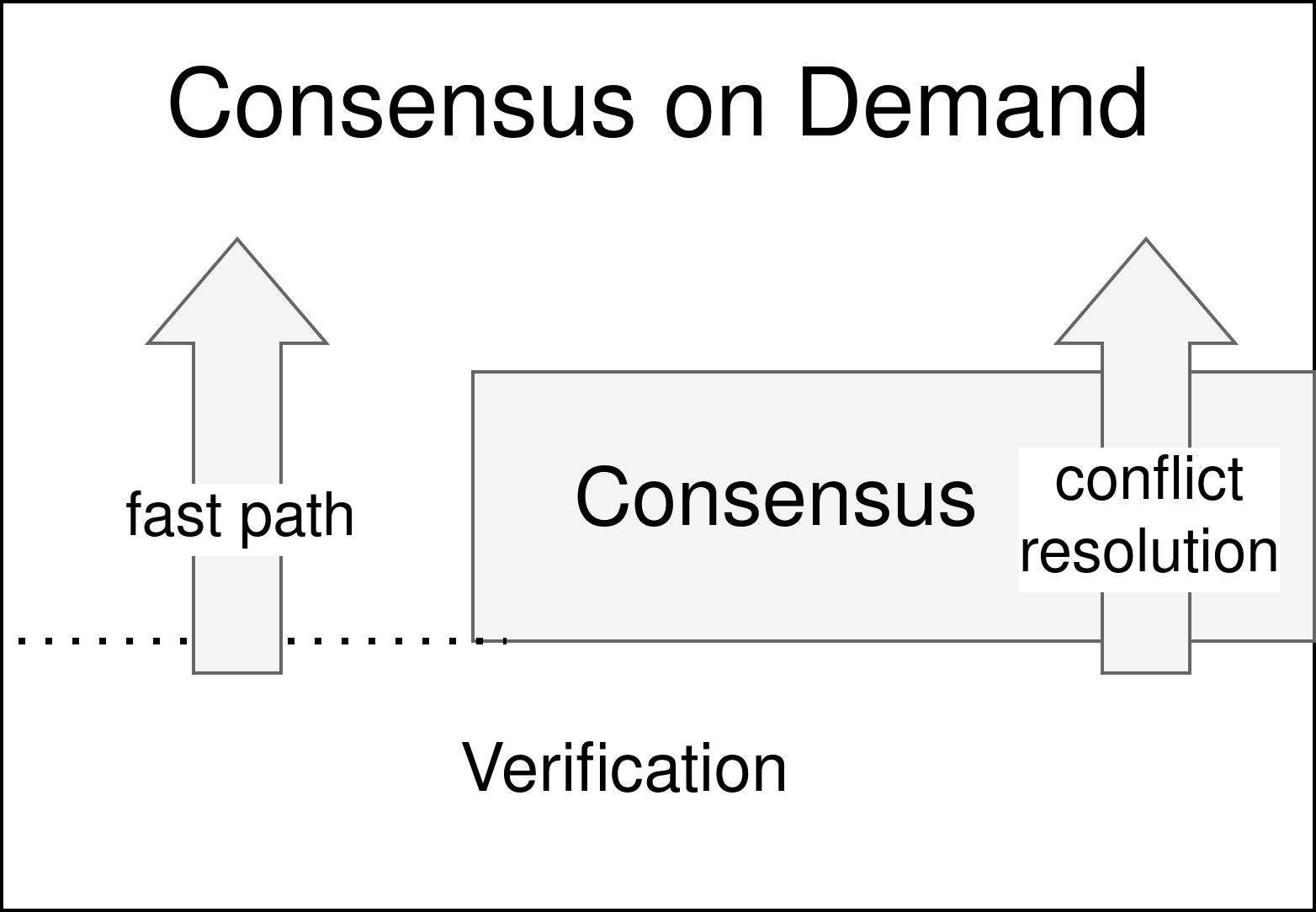}
\caption{A high level overview of our protocol.}
\label{fig:bigpicture}
\end{wrapfigure}

We are faced with a choice: either we use a total ordering currency which cannot scale to a high transaction throughput, or we use a parallel no-consensus verification system that is functionally restricted, and cannot resolve conflicting transactions.

In this work we propose a system which combines the advantages of both approaches. Our system offers all of the benefits of no-consensus systems, such as in principle unbounded throughput and powerful resiliency to network attacks. Our design first tries to verify every transaction without performing consensus. Only if a transaction cannot be verified on this ``fast path'', we invoke a consensus routine to resolve potential conflicts.

Our contributions are as follows:

\begin{enumerate}
    
    \item We present a protocol we call \apost. Assuming access to an existing consensus protocol, \apost is a wrapper algorithm where the first phase offers the benefits of no-consensus systems. In situations where conflicting inputs cannot be processed by pure no-consensus systems (and only those situations), \apost invokes the consensus instance to resolve the deadlock. The wrapper protocol is resilient to completely asynchronous network conditions as long as $n > 5f$, where $n$ is the total number of participants and $f$ is the number of Byzantine participants.
    The common case (no consensus) is optimal with regard to latency and does not rely on complex broadcast primitives. %, but argue that this idea is an important step towards the implementation of concurrent smart contracts.
    
    Thus, we combine the power of processing unrelated inputs in parallel with the ability to resolve conflicting inputs when needed and pave the way for implementations of systems with unbounded throughput and full smart contract functionality.
    
    \item We exhibit our idea in the context of online payments. We describe our protocol, including the pseudocode, and prove the algorithm's correctness.

    \item We implement our design as a digital currency following a no-consensus approach enhanced with consensus on demand. A smart contract is used as the example consensus instance. Our implementation is built on top of the Ethereum client \textit{go-ethereum}, and thus features a network discovery protocol and advanced wallets, while being compatible with the Ethereum ecosystem.
    
    %\item We discuss several tradeoffs and reason about the use of reliable broadcast in the fast path.
    
    % \item Our protocol mixes objects of different consensus numbers but is very simple to reason about. Due to it's leaderless-ness it is easily applicabable in a highly decentralized setting of today's decentralized and permissionless systems ? and well adapted to the practical setting. It is leaderless, does not make any assumptions bla bla
    
    %\item We prove that no consensus algorithm that terminates in 1 communication round can tolerate $f \geq \frac{n}{5}$ in the asynchronous model and $f \geq \frac{n}{4}$ in the synchronous model.
    
\end{enumerate}

\section{Model}

We distinguish between clients and servers. Clients are free to enter and leave the system as they please. 
%They participate by submitting a transaction to a server of their choice. 
Servers are in charge of securing the system. We assume that the set of servers $\Pi$ is fixed and known to all servers.

Clients and servers that follow the protocol are said to be honest. Byzantine clients or servers are subject to arbitrary behavior and might collude when attempting to compromise the system's security.
We assume there are no more than $f$ Byzantine servers and that the set of Byzantine servers is static. Further, let $n = |\Pi|$. We assume that $n > 5f$, in other words, less than one-fifth of servers are Byzantine. 

Servers are connected all-to-all with authenticated links. Communication is {\bf asynchronous} i.e. messages are delivered with arbitrary delays. We assume standard cryptographic primitives to hold, more specifically, MACs and signatures cannot be forged.

Finally, the model might have to be restricted further in order to reflect the assumptions needed for the choice of the underlying consensus instance. For the consensus algorithm chosen for our implementation (see \cref{implementation}) we indeed assume synchronous communication.

\section{Problem Statement} \label{problemstatement}

We formulate the problem in the context of a cryptocurrency. Initially, the state of the system consists of a known assignment of currency amounts to clients.
The system's purpose is to accept transactions, where a transaction $t = (sender, sn, recipient, amount)$ moves an $amount$ of currency from a $sender$ to a $recipient$. Each client can issue transactions as the sender, where the sequence number $sn$ starts from 0 and increases by 1 for each transaction.

\begin{definition}
 Two transactions $t$ and $t'$ are said to \textbf{conflict}, if they have the same sender and sequence number but $t \neq t'$, i.e., the recipient or the amount differ.  %In the UTXO model two transactions conflict if they share at least one {input}.
\end{definition}

Existing broadcast-based payment systems \cite{MerelyBroadcastig,auvolat2020money} provide the guarantees of a Byzantine reliable broadcast for every transaction:

\begin{definition}
\label{definition:rb}
Each honest server observes transactions from a set of conflicting transactions $\{t_0, t_1, . . . \}$. \textbf{Byzantine reliable broadcast} satisfies the following properties:

\begin{enumerate}
    \item \textbf{Totality}: If some honest server accepts a transaction, every honest server will eventually accept the same transaction.
    \item \textbf{Agreement}: No two honest servers accept conflicting transactions. 
    \item \textbf{Validity}: If every honest server observes the same transaction (there are no conflicting transactions), this transaction will be accepted by all honest servers.
\end{enumerate}
\end{definition}

The totality and agreement properties guarantee consistent state of the system and that at most one transaction per unique $(sender, sn)$ pair can be accepted, thus preventing double-spends. 

Validity ensures that if the client issued only one transaction for a given sequence number, the transaction will indeed be accepted. However, otherwise the definition does not guarantee termination. In other words, if the client issues conflicting transactions, the system might deadlock and never decide on accepting any of them.

%Secondly, the totality property prevents so-called \textit{partial-payment} attacks \cite{MerelyBroadcastig}, in which only a subset of servers executes a transaction, potentially misleading a transaction recipient to believe it received funds, whereas other servers might never agree.

Through the use of this weak abstraction, broadcast-based payment systems combine many benefits, such as resilience to complete asynchrony and fast acceptance. The standout advantage is perhaps the inherent ability to parallelize the processing of independent transactions, resulting in unbounded throughput through horizontal scaling \cite{baudet2020fastpay,mathys2021limitlessly}.

The crucial assumption that well-behaved clients will not issue conflicting transactions is warranted for a rudimentary payment system. However, it inherently precludes more advanced applications where conflicting inputs naturally occur, such as uncoordinated parties issuing conflicting inputs to a smart contract. To support the full range of blockchain applications, a stronger guarantee needs to hold:

\begin{definition}
\label{definition:consensus}
Each honest server observes transactions from a set of conflicting transactions $\{t_0, t_1, . . . \}$. \textbf{Consensus} satisfies the following properties:

\begin{enumerate}
%    \item \textbf{Agreement:} No two honest servers accept conflicting transactions. 

%    \item \textbf{Validity:} If every honest server observes the same transaction $t_0$ (there are no conflicting transactions), then $t_0$ is accepted by all honest servers.

    \item All properties of Byzantine reliable broadcast, and
    \item \textbf{Termination:} Every honest server eventually accepts one of the observed transactions.
    
\end{enumerate}
\end{definition}

The objective of this work is to combine the benefits of broadcast-based designs with the power of consensus:
\begin{inparaenum}[a)]
\item non-conflicting transactions are to be processed in a broadcast-based fashion: each honest server broadcasting one acknowledgement for a transaction is enough to accept it; and
\item consensus is supported to resolve conflicts.
\end{inparaenum}

% The objective of this work is thus to combine the best of two worlds: On the one hand, if there are no conflicting transactions observed by honest servers, the transaction is accepted immediately, just like in broadcast-based systems. On the other hand, we provide termination and thus guarantee to resolve conflicts.

%The intuition behind our solution is that to preserve the benefits of no-consensus approaches, we perform consensus lazily, i.e., only when necessary. We achieve lower latency in the fast path than some existing no-consensus systems, since we do not rely on complex broadcast primitives. On the other hand, we do not perform consensus with respect to all transactions as orthodox blockchain systems do, but we order only conflicting inputs.

%Differently put, our goal in this work is to construct a more efficient consensus protocol, defined as follows:

\section{Related Work}
\label{relatedwork}

\renewcommand{\arraystretch}{1.5}
\begin{table*}[]\caption{A comparison of existing solutions and \apost (CoD). The CoD wrapping of a consensus is asynchronous and leaderless, and thus any potentially stronger assumptions are inherited from the consensus instance being used. }\label{tab:comparison}
\centering
\begin{tabularx}{\textwidth}{|c| >{\centering\arraybackslash}X| >{\centering\arraybackslash}X| >{\centering\arraybackslash}X| >{\centering\arraybackslash}X| >{\centering\arraybackslash}X| >{\centering\arraybackslash}X| >{\centering\arraybackslash}X| >{\centering\arraybackslash}X| >{\centering\arraybackslash}X| } \toprule
& \rot{80}{\parbox{2cm}{\centering Bitcoin and\\ Ethereum \cite{bitcoin}}} & \rot{80}{\parbox{2cm}{\centering Ouroboros \cite{kiayias2017ouroboros}}} & \rot{80}{\parbox{2cm}{\centering Algorand \cite{gilad2017algorand}}} & \rot{80}{\parbox{2cm}{\centering PBFT \cite{castro2002practical}}} & \rot{80}{\parbox{2cm}{\centering Red Belly \cite{redbelly}}} & \rot{80}{\parbox{2cm}{\centering BEAT \cite{beat}}} & \rot{80}{\parbox{2cm}{\centering Broadcast-based \cite{MerelyBroadcastig}}} & \rot{80}{\parbox{2cm}{\centering CoD + PBFT}} & \rot{80}{\parbox{2cm}{\centering CoD + BEAT}} \\ \midrule
Energy-efficient & & \checkmark & \checkmark &\checkmark &\checkmark & \checkmark & \checkmark & \checkmark & \checkmark \\ \hline
\makecell{Deterministic \\ finality} & & & \checkmark & \checkmark & \checkmark &\checkmark &\checkmark & \checkmark & \checkmark \\ \hline
Permissionless & \checkmark &\checkmark & \checkmark & & & & & & \\ \hline
Leaderless & & & & & \checkmark & \checkmark & \checkmark & & \checkmark \\ \hline
Asynchronous & & & & & & \checkmark & \checkmark & & \checkmark \\ \hline
Parallelizable & & & & & & & \checkmark & \checkmark & \checkmark \\ \hline
Consensus & \checkmark & \checkmark & \checkmark & \checkmark & \checkmark & \checkmark & & \checkmark & \checkmark \\ \bottomrule
\end{tabularx}
\end{table*}

\subsubsection{Broadcast-based Protocols} In 2016 Gupta \cite{gupta2016non} points out that a payment system does not require consensus. Later, Guerraoui et al. \cite{ConsensusNumber2019} prove that the consensus number of a cryptocurrency is indeed 1 in Herlihy’s hierarchy \cite{herlihy1991wait}.

Both Guerraoui et al. \cite{MerelyBroadcastig} and Baudet et al. \cite{baudet2020fastpay} propose a payment scheme where the ordering of transactions is purely determined by the transaction issuer. In their simplest form those currencies rely on Byzantine reliable broadcast,  as originally defined by Bracha and Toueg \cite{bracha}. Srikanth and Toueg \cite{Srikanth1987} as well as Bracha \cite{BRACHA1987130} propose well-known Byzantine reliable broadcast algorithms with $\mathcal{O}(n^2)$ message complexity per instance.
%The algorithm that makes up the fundamental building block of our proposed payment system is similar to these broadcast based approaches and uses the same Double-Echo phase as Bracha's algorithm \cite{BRACHA1987130}. 
We use Bracha's Double-Echo algorithm \cite{BRACHA1987130} as a fundamental building block and comparison to our approach.
%as fundamental building block of our proposed payment system is similar to these broadcast based approaches and uses the same Double-Echo phase as Bracha's algorithm \cite{BRACHA1987130}. 

The Cascade protocol \cite{Sliwinski2021asynchronous} promises similar benefits, while being permissionless, i.e., participants are free to enter and leave the system as they please. 

Other approaches have proposed a probabilistic Byzantine reliable broadcast \cite{Extended}. By dropping determinism, efficiency is gained, more specifically $\mathcal{O}(n \thinspace \log(n))$ messages are shown to be sufficient for each transaction. Our implementation relies on a practical and widely adopted probabilistic broadcast protocol.

Instead, it is possible to drop the \textit{totality} property of Byzantine reliable broadcast and build a payment system where servers distribute themselves proof (a list of signatures) that they are indeed in the possession of the claimed funds. This was also proposed by Guerraoui et al. \cite{MerelyBroadcastig}, based on a digital signature approach inspired by Malkhi and Reiter \cite{malkhi1997high}. The message complexity is hereby improved to $\mathcal{O}(n)$.
%We abstract away the message complexity in large parts to instead focus on the implementation. Nonetheless, we provide a performance analysis in \cref{performance}.  

\subsubsection{Remedying the Consensus Bottleneck} 
%The classical problem of state machine replication (SMR) 
Early work by Pedone et al. \cite{pedone1999generic} and Lamport \cite{lamport2005generalized} recognizes that \textit{commuting} transactions do not need to be ordered in the traditional state machine replication (SMR) problem with crash failures. Follow-up protocols also deal with Byzantine faults and show fundamental lower-bounds \cite{lamport2006lower,pires2018generalized,raykov2011Byzantine}.

Removing global coordination in favor of weaker consistency properties also receives a lot of attention outside the area of state machine replication. Conflict-free Replicated Data Types (CRDT) \cite{rDTGotsman,CRDTOverview} provide a principled approach to performing concurrent operations optimistically, and have recently also been applied to permissioned blockchains \cite{FabricCRDT}. 
% Other techniques to increase scalability include early transaction abort \cite{ruan2020transactional,sharma2019blurring} and hybrid transaction execution models \cite{gorenflo2020xox}.

%The underlying classical problem of state machine replication (SMR) has received extensive attention CITE. The observation that consensus isn't alaways needed dates back to work from Pedone et al. on generic broadcast[]. Protocols such as 
%This line of work also aims at reducing consensus use, while providing higher concurrency and throughput, or lower latency.

%Martin et al. \cite{martin2006fast} have proven that the optimal fast path of a single communication round-trip requires $n > 5f$ servers. The proposed protocol is leader-based and employs multiple rounds. 

It is often tricky to compare protocols, as they can differentiate themselves in one of the many dimensions, such as synchrony, fault-tolerance and fast path latency \cite{bazzi2021clairvoyant}. A recent protocol called Byblos \cite{bazzi2021clairvoyant} achieves 5-step latency in a partially synchronous network when $n > 4f$. Suri-Payer et al. \cite{suri2021basil} improve the fast path latency to 2 communication steps, in the absence of Byzantine behavior.

Kursawe's optimistic Byzantine agreement protocol \cite{kursawe2002optimistic} features a fast path paired with a consensus protocol in the slow path, with each component being modular. While Kursawe's proposed fast path requires synchronous rounds and no Byzantine failures to happen, our protocol features the same optimal fast path of a single round-trip, while not relying on synchrony and tolerating $f$ Byzantine servers. This comes at the cost of requiring $n \geq 5f + 1$ servers. This bound has been shown to be optimal by Martin et al. \cite{martin2006fast}. Kuznetsov et al. \cite{kuznetsov2021revisiting} have recently shown the lower bound to be $n \geq 5f - 1$ in the special case where the set of \textit{proposers} (clients) is a subset of \textit{acceptors} (servers). Their insight is to disregard the acknowledgement of a provably misbehaving server. Although we do not assume the required special case, as in our model the set of clients is external to servers and changing freely, the assumption might well be warranted in other contexts, wherein their approach is applicable to our work.

Our protocol improves upon the solutions of Kursawe and Kuznetsov et al. by being leaderless and asynchronous even in the slow path. This is crucial as leader-based protocols have been shown to be susceptible to throughput degradation in the case of even one slow replica \cite{underAttack,leaderlessConsensus,alvisi,beat,yin2019hotstuff}. Song et al. \cite{song2008bosco} solution is probably most similar to ours, as their Bosco algorithm provides the same decision latency as ours. However, their solution does not focus on reducing the number of invocations of the underlying consensus, meaning that consensus is still performed for every decision.

Sharding is the process of splitting a blockchain architecture into multiple chains, allowing parallelization as each chain solves the state replication task separately. The improvement brought forward in this area \cite{avarikioti2019divide} is orthogonal to the one we address in this work. Indeed, while having multiple shards allows systems to parallelize operations overall, inside each shard transactions still need to be processed sequentially. 
% cite rapidchain removed since not enough space

%Our simple currency is not permissionless, however we take over one key property: Contrary to most broadcast based implementations every participant of the system has a different voting power, typically based on how much stake it owns or has been delegated to. 

\subsubsection{Implementations} Recent systems that remove or reduce the need for consensus have shown great promise in terms of practical scalability. More specifically, Astro \cite{MerelyBroadcastig} is able to perform 20,000 transactions per second, in a network of 200 nodes, with transactions having a latency of less than a second. A similar system by Spiegelman et al. \cite{tusk} that uses consensus without creating overhead achieves 160,000 tx/sec with about 3 seconds latency in a WAN. The Accept system \cite{mathys2021limitlessly} scales linearly, and has been shown to achieve 1.5 Million tx/sec.

\section{A Simple Payment System}
\label{paymentsystem}

%\subsection{Overview}

We describe a digital currency called \bcoin that serves as a foundation. The protocol disseminates transactions through separate instances of Byzantine reliable broadcast. Crucially, the protocol does not rely on transactions being executed sequentially.

As explained in Section \ref{problemstatement}, clients start with a given account balance. Clients can access a server to submit transactions $t = (sender, sn, recipient, amount)$.
%Transactions are tuples, containing the sender of the transaction, a sequence number, the recipient, as well as the amount $a$ of currency to be transferred.
We assume that all transactions are signed using public-key cryptography and that servers only handle transactions with valid signatures. Clients can go offline whenever they please, but are required to keep track of the number of transactions they have performed so far, in order to choose correct, i.e. increasing, sequence numbers.

The \bcoin algorithm determines the agreed order of transactions of a given client to be executed. A transaction accepted by the underlying Byzantine reliable broadcast instance is executed (i.e., the funds are moved) as soon as all previous transactions belonging to the corresponding sender are executed, and enough funds are available in the sender's balance.

The \textbf{\bcoin} interface of a server (\textit{bc}) exports the following events:
\begin{itemize}
    \item \textbf{Request}: $\event{bc}{Transfer}[s, sn, r, a]$ : Allows a client $s$ to submit a transaction with sequence number $sn$ sending $a$ units of cryptocurrency to a recipient client $r$. 
    \item \textbf{Request}: $\event{bc}{RequestBalance}[c]$ : Retrieves the amount of cryptocurrency client $c$ currently owns.
    \item \textbf{Indication}: $\event{bc}{Balance}[c, a]$: Amount $a$ of cryptocurrency currently owned by client $c$.
    %\item \textbf{Indication}: $\event{bc}{Error}[m]$: Error message $m$ when performing a request.
\end{itemize}

% \subsection{Algorithm}

\begin{algorithm}[hbt!]
\begin{algorithmic}[1]
%\Implements
%    \Instance{ABC-consensus}{bc}{}
%\EndImplements

\Uses
    \Instance{Authenticated Perfect Point-to-Point Links}{al}
    \Instance{Byzantine Reliable Broadcast}{rb}
\EndUses

\Upon{bc}{Init}{initialDistribution}
    \State $currentSN := [\,](-1)$;                 \algorithmiccomment{dictionary initialized with -1}
    \State $pending := \{\}$;                   \algorithmiccomment{empty set}
    \State $balance := initialDistribution$;       \algorithmiccomment{dictionary}

\EndUpon

\Upon{bc}{RequestBalance}{client}
    \Trigger{bc}{Balance}{client, \textit{balance}[client]};
\EndUpon

\Upon{bc}{Transfer}{[sender, sn, recipient, amount]}
    %\If{sn \neq currentSN[sender] - 1}
    %\Trigger{bc}{Error}{\textit{Wait for prev. tx to be executed.}};
    %\State \textbf{return};
    %\EndIf
    
    %\If{balance[sender] < amount}
    %    \Trigger{bc}{Error}{\textit{Not enough funds.}};
    %    \State \textbf{return};
    %\EndIf
    \State $t := [sender, sn, recipient, amount]$;
    \Trigger{rb}{Broadcast}{[sender, sn], t}; \algorithmiccomment{will be changed in \cref{aposteriori}}
    %\State $localSN += 1$;
\EndUpon

\Upon{rb}{Deliver}{[sender, sn], t}

    \State $pending[t.sender] = pending[t.sender] \cup t$;
\EndUpon

\UponCondition{ {\exists t \in pending} \textbf{ such that } isValidToExecute(t)}
    \State $balance[t.sender] = balance[t.sender] -t.amount$;
    \State $balance[t.recipient] = balance[t.recipient] + t.amount$;
    \State $currentSN[t.sender] = currentSN[t.sender] + 1$;
    \State $pending[t.sender] = pending[t.sender] \setminus t $; 
\EndUponCondition

%\Procedure{isValidTx}{t} \label{line:validToReady}
%      \State \textbf{return} $approvedSN[t.sender] = t.sn - 1$;
%\EndProcedure

\Procedure{isValidToExecute}{t} \label{line:isValid}
      \State \textbf{return} $currentSN[t.sender] = t.sn - 1 \land balance[t.sender] \geq t.amount$;
\EndProcedure

\end{algorithmic}
\caption{BroadcastCoin}
\label{algorithm:bcoin}
\end{algorithm}

%This algorithm thus shows how to implement a currency in the account model on top of a reliable broadcast primitive. We use the Double-Echo Broadcast by Bracha and Toueg \cite{bracha}.

In Byzantine reliable broadcast algorithms, a transaction $t$ typically undergoes the following steps before being accepted:
\begin{enumerate}
    \item \textit{Dissemination}: A server broadcasts $t$ received by a client by sending it to all servers.
    \item \textit{Verification}: Servers acknowledge $t$ if they have never acknowledged a conflicting transaction $t'$. 
    \item \textit{Approval}: Servers that receive more than $\frac{n+f}{2}$ acknowledgements for a transaction, broadcast an {\tt APPROVE} message. Servers also broadcast an {\tt APPROVE} message, if they see more than $f+1$ approvals. A server that receives more than $2f + 1$ approvals, accepts the transaction.
\end{enumerate}

%Note that by only acknowledging transactions for which all transactions with lower sequence numbers have already been accepted, and where the sender is in possession of enough currency, we satisfy the additional property that if a transaction is accepted, then eventually every server will be able to execute it.
%This is opposed to transactions that might be seen as ``empty-promises'' until the sender actually obtains the necessary funds. 

% This property is a recurring topic..

% We refer the reader to Appendix \ref{appendix:bcoin} for the proof of correctness of the presented algorithm.

\section{Consensus on Demand}
\label{aposteriori}

This section presents the core of our contribution that improves upon \bcoin by providing higher functionality as well as lower latency in the fast path. The Byzantine reliable broadcast instance $rb$ is substituted by two steps. 
A best-effort broadcast primitive is used to disseminate transactions efficiently. Then the first transaction $t$ for a given $(sender, sn)$ received by a server is the input value proposed in the corresponding \apost instance. \apost uses an underlying consensus instance to provide conflict resolution when necessary. We stress that the combination of the broadcast and consensus steps can be implemented in a variety of ways. The version we present in the following consists of best-effort broadcast paired with consensus as defined in \cref{definition:consensus}, while in \cref{discussion} we mention a different combination. As before, a transaction traverses three stages:

\begin{enumerate}
    \item \textit{Dissemination:} The transaction is broadcast to all servers. 
    \item \textit{Verification:} Servers issue an acknowledgement for the first valid transaction they observe for a given $(sender, sn)$ pair. If at any point, a server observes a quorum of more than $\frac{n+3f}{2}$ acknowledgements for a transaction $t$, the server accepts $t$.
    \item \textit{Consensus (opt.):} If after receiving $n-f$ acknowledgements servers observe conflicting acknowledgments, they propose the transaction for which they have observed the most acknowledgements up to this point to the consensus instance identified by the $(sender, sn)$ pair. The transaction decided by the consensus routine is then accepted immediately, if the transaction hasn't already been accepted by the fast path.
\end{enumerate}

Note that the first stage is identical to the first stage in the Byzantine reliable broadcast considered in \cref{paymentsystem}. Although the acceptance condition is also similar, it is performed without the additional broadcast round of {\tt APPROVE} messages. This means that in the common case, transactions are accepted with less delay. The final stage consists of performing consensus if necessary.

The crux of this construction is that a transaction accepted by the fast path should never conflict with a transaction accepted in the slow path. This holds true, since if a transaction $t$ can be accepted by an honest server in the fast path, even though conflicting transactions exists, then every other honest server is guaranteed to observe a majority of acknowledgements for $t$ in a quorum of size $n-f$. Thus, all honest servers will propose $t$ to the underlying consensus instance, and by its validity property, every server will eventually also accept $t$.

Fig. \ref{figure:intuition} illustrates this argument in the case where $n = 5f + 1$. There are $3f+1$ honest servers that acknowledge $t$ and $f$ honest servers that acknowledge $t'$. By issuing acknowledgements for $t$, the adversary could bring some servers to accept the transaction $t$ in the fast path. Hence, \apost should never accept $t'$. This can be guaranteed, as every quorum containing more than $n-f$ servers (such as $Q_1$) has a majority of servers acknowledging $t$. Thus, every server will propose $t$ to the consensus instance, which will accept $t$ due to its validity property. \cref{theorem:agreement+} proves this intuition.

\begin{figure}[h]
\includegraphics[width=.5\textwidth]{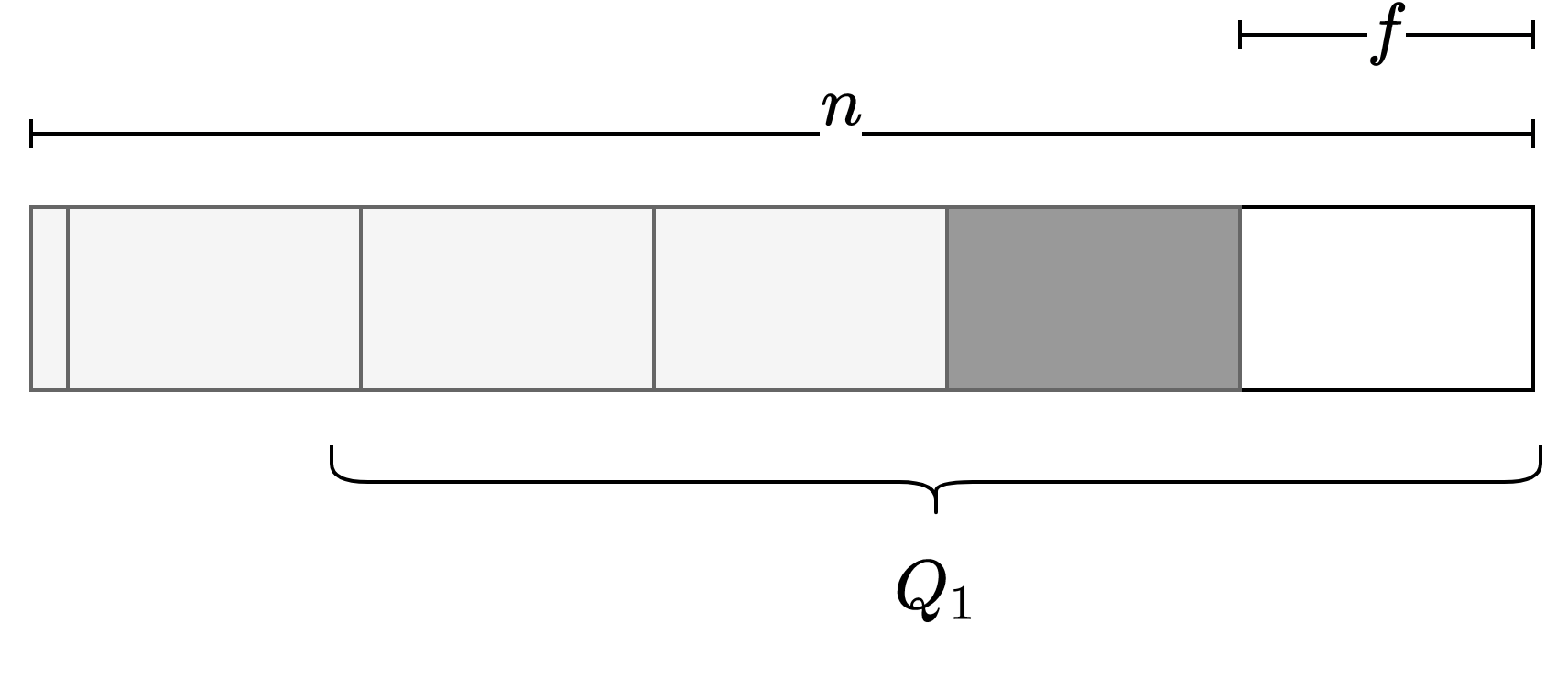}
\centering
\caption{The two shades of gray represent the share of honest servers acknowledging $t$ (light gray) and $t'$ (dark gray). The adversary is depicted in white, and can acknowledge either transaction. While a server might see more than $4f$ acknowledgments for $t$, no server sees a majority of acknowledgements for $t'$ in a quorum of $n-f$ servers. }

\label{figure:intuition}
\end{figure}

%\subsection{Algorithm}
%\label{aposterioiri:algorithm}

% This section presents a single instance of \apost, used to decide on which transaction shall be executed for a given $(sender, sn)$ instance. The overall payment system remains the same as expressed in \cref{paymentsystem}. 

% The algorithm for a single $(sender, sn)$ instance is shown in \cref{algorithm:apost} in Appendix \ref{appendix:cod} and is called  \textbf{\apost}. It follows the intuition we presented in \cref{aposterioiri:overview} and solves consensus as depicted in \cref{definition:consensus}.

%In order to enforce the fact that transactions can only be accepted if they are guaranteed to be executed, by checking 
%TODO property of no empty promises can be enforced by checking available funds and sequence numbers acked on line 42, 30 and 19.

\begin{algorithm}[hbt!]
\begin{algorithmic}[1]
\Implements
    \Instance{Consensus}{fc}{\text{ for the (sender, sn) tuple }}
\EndImplements

\Uses
    \Instance{Consensus}{con}
    \Instance{Authenticated Perfect Point-to-Point Links}{al}
\EndUses

\UponPure{fc}{Init}
    \State $accepted, con\_proposed := \text{\tt False}$;
    \State $acks := [n](\bot)$;  \algorithmiccomment{array of size $n$ initialized with $\bot$}
\EndUponPure

%\Upon{fc}{Broadcast}{t}
%    \ForAll{q}{\Pi}
%        \Trigger{al}{Send}{q, [\text{\tt SEND}, t]}
%    \EndForAll
%\EndUpon

\Upon{fc}{Propose}{t}
    %\State $acked :=  \text{\tt True}$;
    \ForAll{q}{\Pi}
        \Trigger{al}{Send}{q, [\text{\tt ACK}, t]}
    \EndForAll
\EndUpon

\Upon{al}{Deliver}{p, [\text{\tt ACK}, t]}
    \If{acks[p] = \bot}
        \State $acks[\textit{p}] := t$;
    \EndIf
\EndUpon

\UponExists{t \neq \bot}{\text{\#}(\{p \in \Pi \mid acks[p] = t\}) \geq \frac{n+ 3f}{2} \text{ \textbf{and} } accepted = \text{\tt False}}
    \State $accepted := \text{\tt True}$;
    \Trigger{fc}{Accept}{t};
\EndUponExists

\UponExists{p, q \in \Pi}{acks[p] \neq acks[q] 
\text{ \textbf{and} } \text{\#}(\{p \in \Pi \mid acks[p] \neq \bot \}) \geq n-f 
\text{ \textbf{and} }  con\_proposed = \text{\tt False}}
    \State $majority := argmax_{t \in T}(\text{\#}(\{p \in \Pi \mid acks[p] = t\})$
    \State $con\_proposed := \text{\tt True}$;
    \Trigger{con}{Propose}{majority}
\EndUponExists

\UponSuchThat{con}{Accept}{t}{accepted = \text{\tt False}} % To satisfy termination we must immediately accept t.
    \State $accepted := \text{\tt True}$;
    \Trigger{fc}{Accept}{t};
\EndUponSuchThat

%\Upon{con}{Deliver}{t} % Instead of immediately accepting t, we can also issue an ack! WRONG: We would not guarantee termination!
%    \State \textit{acked} := \text{\tt True};
%    \ForAll{q}{\Pi}
%        \Trigger{al}{Send}{q, [ACK, t]}
%    \EndForAll
%\EndUpon

\end{algorithmic}
\caption{\apost}
\label{algorithm:apost}
\end{algorithm}

%\subsection{Correctness}
%\label{aposterioiri:correctness}

% We want to prove that \bcoin extended with \apost satisfies the consensus properties including \textit{Fast-Termination} as outlined in \cref{definition:consensus}.

\begin{theorem}
\label{theorem:validity+}
\apost satisfies \textit{Validity}.
\end{theorem}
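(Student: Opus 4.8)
The plan is to unfold the definition of \emph{Validity} from \cref{definition:consensus} in the no-conflict setting and trace what happens in \cref{algorithm:apost}. Assume every honest server observes the same transaction $t$ for a given $(sender, sn)$ pair, i.e. there are no conflicting transactions. I must show every honest server eventually triggers $\event{fc}{Accept}[t]$, and moreover that it accepts $t$ (agreement is a separate theorem, but validity as stated asks for acceptance of \emph{this} transaction). First I would observe that each of the $\geq n-f$ honest servers executes $\event{fc}{Propose}[t]$ and hence, via the \texttt{al}-Send loop, sends $[\texttt{ACK}, t]$ to all servers; by the reliability of authenticated perfect point-to-point links, every honest server eventually has $acks[p] = t$ for all $\geq n-f$ honest $p$ (the very first \texttt{ACK} from each honest $p$ is $t$, since honest servers only observe $t$, so the $acks[p] = \bot$ guard is passed with value $t$).

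Next I would split on whether the adversary's $f$ servers also send $\texttt{ACK}$ messages. The key counting step: the acceptance guard fires when $\#\{p : acks[p] = t\} \geq \frac{n+3f}{2}$. Since $n > 5f$, we have $n - f > \frac{n+3f}{2}$ (equivalently $2n - 2f > n + 3f$, i.e. $n > 5f$), so the $\geq n-f$ honest acknowledgements of $t$ alone already exceed the threshold $\frac{n+3f}{2}$. Therefore at every honest server, once all honest \texttt{ACK}s for $t$ have arrived, the \texttt{UponExists} acceptance guard is satisfied with this $t$, and since $accepted$ was initialized to \texttt{False}, the server sets $accepted := \texttt{True}$ and triggers $\event{fc}{Accept}[t]$. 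I would also note that the conflict-detection guard (requiring $acks[p] \neq acks[q]$ for some $p,q$) can only fire if two distinct non-$\bot$ values appear; the only way that happens is if a Byzantine server sends $\texttt{ACK}$ for some $t' \neq t$ — but even then the argument above shows $t$ is accepted via the fast path regardless, so the optional consensus stage cannot cause a different transaction to be accepted first, and in fact by the same counting $t$ is accepted before or independently of any consensus decision.

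The main obstacle — really the only subtle point — is making the asynchrony argument airtight: I must be careful that the acceptance guard is an \texttt{UponExists} that is re-evaluated as messages arrive, so I should phrase it as "eventually the set $\{p : acks[p] = t\}$ reaches size $\geq n-f > \frac{n+3f}{2}$, at which point (if not already accepted) the guard triggers", and argue that no earlier event can have set $accepted := \texttt{True}$ with a value other than $t$. Since the only competing write to $accepted$ comes from the \texttt{con}-Accept handler or the fast-path handler, and in the no-conflict case no honest server ever proposes anything but $t$ to \texttt{con} (the conflict guard needs distinct acks, which requires Byzantine participation, and even if triggered, $majority = t$ because honest acks for $t$ outnumber the $\leq f$ Byzantine acks within any quorum of size $\geq n-f$), every possible value accepted is $t$. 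Hence Validity holds. I would close by remarking that this is exactly the "fast path in one round-trip" promised in \cref{aposteriori}: no \texttt{APPROVE} round, no consensus invocation, in the common case.
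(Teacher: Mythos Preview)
Your proof is correct and follows the same core argument as the paper: all $\geq n-f$ honest servers acknowledge $t$, and since $n > 5f$ gives $n - f > \frac{n+3f}{2}$, the fast-path acceptance threshold is eventually met at every honest server. The paper's proof stops at exactly that inequality; your additional case analysis---Byzantine \texttt{ACK}s potentially triggering the conflict guard, and the observation that in any set of $\geq n-f$ acknowledgements at most $f$ are Byzantine so $majority = t$---is more careful than the paper's own proof, which does not explicitly rule out an earlier acceptance via the consensus path.
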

\begin{proof}
If every honest server observes the same transaction $t$, then every honest server broadcasts an acknowledgment for $t$. Thus every server is guaranteed to eventually observe at least $n-f$ acknowledgements for $t$. 
Since $f < \frac{n}{5}$, it follows that $\frac{n + 3f}{2} < n - f$, thus every server eventually accepts $t$.
\end{proof}

\begin{theorem}
\label{theorem:termination+}
\apost satisfies \textit{Termination}.
\end{theorem}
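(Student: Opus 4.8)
The plan is to show that every honest server eventually reaches the \texttt{Accept} event, by a case analysis on whether some honest server accepts a transaction via the fast path (the quorum-of-$\frac{n+3f}{2}$ rule) or not. The key observation is that the consensus subroutine \texttt{con} provides Termination by \cref{definition:consensus}, so it suffices to argue that every honest server eventually triggers $\event{con}{Propose}$, \emph{unless} it has already accepted via the fast path.

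First I would handle the easy direction. If no honest server ever accepts in the fast path, I must show every honest server eventually proposes to \texttt{con}. Each honest server broadcasts an \texttt{ACK} for the first valid transaction it observes for the given $(sender, sn)$, and by the properties of the authenticated links every honest server eventually collects the \texttt{ACK}s of all $n-f$ honest servers, so the size condition $\#\{p : acks[p] \neq \bot\} \geq n-f$ is eventually met. Then either all observed \texttt{ACK}s agree — but then more than $n - f > \frac{n+3f}{2}$ \texttt{ACK}s are for the same $t$, contradicting the assumption that nobody accepts in the fast path — or two \texttt{ACK}s disagree, so the guard of the consensus-triggering \texttt{UponExists} block fires and the server proposes. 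Hence every honest server invokes \texttt{con}, which terminates and returns some decided $t$; the server then accepts $t$ (either directly, or it had already set $accepted$, but in this sub-case it had not accepted in the fast path, so it was still possible it accepted via a \emph{previous} \texttt{con} delivery — in any event $accepted$ becomes \texttt{True} and an \texttt{Accept} is triggered).

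Second, the case where some honest server \emph{does} accept $t^\star$ in the fast path. Here I would reuse the argument already sketched before \cref{theorem:agreement+} in the text: having more than $\frac{n+3f}{2}$ \texttt{ACK}s for $t^\star$ means more than $\frac{n+3f}{2} - f = \frac{n+f}{2}$ honest servers \texttt{ACK}ed $t^\star$, so in \emph{any} set of $n-f$ \texttt{ACK}s received by any honest server, a strict majority are for $t^\star$ (the remaining servers number fewer than $n - \frac{n+f}{2} = \frac{n-f}{2}$, and with the $f$ Byzantine \texttt{ACK}s arbitrary, the count for $t^\star$ still exceeds half of $n-f$ — this is the inequality to verify carefully using $n > 5f$). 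Consequently every honest server that reaches the consensus guard computes $majority = t^\star$ and proposes $t^\star$; by Validity of \texttt{con}, \texttt{con} decides $t^\star$, and every honest server accepts $t^\star$ (via fast path or via \texttt{con}). The servers that never reach the consensus guard are exactly those that already accepted $t^\star$ in the fast path. Either way, all honest servers accept.

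The main obstacle I anticipate is the bookkeeping in the second case: making the counting argument airtight that $t^\star$ is the unique $\argmax$ in every honest server's view of size $\geq n-f$, even accounting for the adversary placing all $f$ fake \texttt{ACK}s on a single competing transaction, and for the fact that an honest server's $acks$ array may contain \emph{more} than $n-f$ entries by the time the guard fires. This requires the clean inequality chain from $\#\{p : acks[p]=t^\star\} > \frac{n+3f}{2}$ and $n>5f$ to "more than half of any $\geq(n-f)$-subset." A secondary subtlety is ensuring the consensus guard and the fast-path guard interact correctly — e.g. a server might set $con\_proposed$ and later also accept in the fast path, or vice versa — but since both paths drive toward the same $t^\star$ in this case, and \texttt{con}'s Agreement matches the fast-path acceptance by \cref{theorem:agreement+}, no inconsistency arises and termination still follows.
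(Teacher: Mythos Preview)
Your plan is essentially correct but takes a more involved route than the paper. The paper's proof splits on whether \emph{all honest servers observe the same transaction}: if yes, the fast-path threshold is met everywhere (reusing the Validity argument); if no, two honest servers sent differing \texttt{ACK}s, so \emph{every} honest server eventually sees a conflict among its $\geq n-f$ received acks and proposes to \texttt{con}, whose Termination then finishes the job. This split is cleaner because it directly controls whether all honest servers invoke \texttt{con} or none need to, and it never touches the Agreement-style counting argument. Your split, on whether \emph{some} server accepts in the fast path, is orthogonal and forces you in Case~2 to re-derive the $\frac{n+f}{2}$ majority argument from \cref{theorem:agreement+} just to identify what is proposed --- but for Termination you only need that \texttt{con} \emph{decides something}, not that it decides $t^\star$.

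There is also a subtle gap in your Case~2: you invoke Validity of \texttt{con} to conclude it decides $t^\star$, but Validity (and Termination) of \texttt{con} presuppose that \emph{every} honest server has proposed. In your Case~2 that need not hold: a server that accepted in the fast path and never receives a conflicting \texttt{ACK} (e.g.\ because the only conflicting \texttt{ACK}s are Byzantine and were not sent to it) never triggers the consensus guard. You claim ``the servers that never reach the consensus guard are exactly those that already accepted $t^\star$ in the fast path,'' which is true, but it does not rescue the servers that \emph{did} propose and are now waiting on a possibly under-populated \texttt{con}. The fix is the observation the paper's decomposition gives for free: if any honest server never sees conflicting \texttt{ACK}s, then all honest \texttt{ACK}s agree, so every honest server eventually collects $n-f > \frac{n+3f}{2}$ agreeing \texttt{ACK}s and accepts in the fast path regardless of \texttt{con}; otherwise every honest server proposes, and Termination of \texttt{con} applies directly.
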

\begin{proof}
If every honest server observes the same transaction $t$, by the same argument as in \cref{theorem:validity+}, every server accepts $t$ in a single message round-trip. If an honest server observes too many conflicting acknowledgments to accept a transaction on the fast path, then at least two honest servers have issued conflicting transactions. Hence, eventually, every correct server will propose a transaction to the consensus instance $con$. By \textit{termination} of consensus, $con$ will eventually accept a transaction, and thus every honest server will eventually accept a transaction.
\end{proof}

\begin{theorem}
\apost satisfies \textit{Agreement}.
\label{theorem:agreement+}
\end{theorem}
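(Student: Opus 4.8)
The plan is a case analysis on how the two honest servers came to accept a transaction. An honest server accepts a transaction only through the fast-path rule (observing more than $\frac{n+3f}{2}$ matching acknowledgments) or through the \texttt{con.Accept} rule, and the \texttt{accepted} flag guarantees it does so at most once. Throughout I would use two elementary facts: each honest server triggers \texttt{fc.Propose} at most once and hence sends the \emph{same} message $[\texttt{ACK},t]$ to every server, so $acks_A[p]=acks_B[p]$ whenever $p$ is honest and both entries are set; and any server recorded (by anyone) as acknowledging $t''$ is either Byzantine or an honest server whose unique acknowledgment is $t''$. So let honest $A$ accept $t$ and honest $B$ accept $t'$; the goal is $t=t'$.

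\emph{Both on the fast path.} Put $Q_A=\{p:acks_A[p]=t\}$ and $Q_B=\{p:acks_B[p]=t'\}$, each of size more than $\frac{n+3f}{2}$. Then $|Q_A\cap Q_B|\ge|Q_A|+|Q_B|-n>3f\ge f$, so $Q_A\cap Q_B$ contains an honest server $p$; but $p$ sent $[\texttt{ACK},t]$ to $A$ and $[\texttt{ACK},t']$ to $B$, which are the same message, so $t=t'$.

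\emph{At least one acceptance via $con$.} Then $con$ decided some value; if $A$ also accepted via $con$, then $t=t'$ by the Agreement property of $con$ (\cref{definition:consensus}). The remaining (and key) subcase is that $A$ accepted $t$ on the fast path while $B$ accepted $t'$ from $con$; here I would show $con$ can only have decided $t$, making rigorous the intuition of \cref{figure:intuition}. Since $A$ saw more than $\frac{n+3f}{2}$ acknowledgments for $t$ and at most $f$ senders are Byzantine, more than $\frac{n+3f}{2}-f=\frac{n+f}{2}$ honest servers acknowledged $t$; call this set $H_t$, each of whose members acknowledges only $t$, consistently, to everyone. Now take any honest server $s$ at the moment it triggers the consensus-proposal rule: it has recorded acknowledgments from a set $R$ with $|R|\ge n-f$, hence at least $|R\cap H_t|\ge|R|+|H_t|-n>(n-f)+\frac{n+f}{2}-n=\frac{n-f}{2}$ acknowledgments for $t$, whereas for any $t''\ne t$ the servers it has recorded acknowledging $t''$ are disjoint from $H_t$ and so number at most $n-|H_t|<\frac{n-f}{2}$. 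Thus $t$ is the strict $\argmax$ and $s$ proposes $t$ to $con$. Since every honest server that proposes to $con$ proposes $t$, and by Termination/Validity the decision of $con$ is a transaction proposed by an honest server, $con$ decides $t$, so $t'=t$.

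The step I expect to be the main obstacle is the counting in the fast-path-vs-$con$ subcase: one has to pin down exactly which quorums are available to $s$ when it proposes and check that the invariant ``$t$ holds a strict plurality inside every $(n-f)$-quorum'' survives Byzantine equivocation — this is precisely where the $\frac{n+3f}{2}$ threshold and the bound $n>5f$ are used. The other cases collapse to a one-line quorum-intersection argument or a direct appeal to Agreement/Termination of $con$; a minor but necessary point is to fix the exact form of $con$'s Validity/Termination being invoked, namely that $con$'s decision must be a value that some honest server proposed, so that ``every honest proposal equals $t$'' indeed forces $con$ to decide $t$.
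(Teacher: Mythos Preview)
Your proof is correct and follows essentially the same approach as the paper: the key counting step --- that a fast-path acceptance of $t$ implies more than $\frac{n+f}{2}$ honest acknowledgments for $t$, which in turn forces $t$ to be the strict plurality in every honest server's $(n-f)$-sample and hence the value every honest proposer hands to $con$ --- is identical to the paper's argument. Your case decomposition is slightly finer (you treat the both-fast-path subcase by a direct quorum-intersection rather than as a by-product of the plurality bound), and you are more explicit than the paper in flagging that the form of $con$'s validity actually being used is ``if all honest proposals equal $t$ then the decision is $t$''.
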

\begin{proof}

First, let us assume that a server accepts a transaction $t$ without using the consensus instance. This means that the server has seen more than $\frac{n+3f}{2}$ acknowledgments for $t$. This implies that more than $\frac{n+3f}{2} - f = \frac{n+f}{2}$ honest servers have acknowledged $t$.
%Every honest server will thus receive more than $\frac{n+f}{2}$ acknowledgements for $t$. 

Before proposing, every server waits for the arrival of $n-f$ acknowledgements, out of which at least $n-2f$ come from honest servers. Together, both sets contain  more than $\frac{n+f}{2} + n - 2f = \frac{3(n - f)}{2}$ acknowledgements coming from honest servers. However, there are no more than $n-f$ honest servers, meaning that both sets have more than $\frac{3(n - f)}{2} - (n-f) = \frac{n-f}{2}$ acknowledgements in common. This implies that acknowledgements for $t$ will be the most received acknowledgement at every honest server. 

Therefore, every honest server will either accept $t$ though its fast path or, if there is a conflicting transaction, propose $t$ to the consensus instance. Due to its validity property, no honest server will accept a value different from $t$, thus satisfying agreement.

If no server observes more than $\frac{n+3f}{2}$ acknowledgments for a single transaction, then all honest servers will fall back to the consensus instance, and due to its agreement property, the agreement of \apost is also satisfied. 
\end{proof}

\section{Discussion}
\label{discussion}

\paragraph{\bf Throughput.}
No-consensus payment systems have been shown to scale linearly with more computing resources \cite{MerelyBroadcastig,baudet2020fastpay}. In particular the simple design of Mathys et al. \cite{mathys2021limitlessly} can be directly applied as the implementation of the fast path of our design, and their result supports our claim that the fast path of our protocol has in principle unbounded throughput.
%. Mathys et al. exhibit unbounded throughput through horizontal scaling of multiple machines
%Replacing reliable broadcast with consensus on demand does not change this. Indeed, in the fast path the required computation and communication becomes lower, and in the slow path, each consensus instances can be run on a separate, completely independent thread.

\paragraph{\bf Slow path abuse.}
In \apost, malicious clients can increase the likelihood that consensus needs to be performed by submitting conflicting transactions intentionally.

Due to the completely asynchronous communication model, in our protocol servers keep listening for potentially conflicting acknowledgements of past transactions that might trigger a consensus invocation. This requirement can be avoided by replacing best-effort broadcast in the fast path with (probabilistic) reliable broadcast. In this configuration, servers for which the fast path succeeds do not have to participate in the slow path at all, as thanks to reliable broadcast's totality property, every honest server is guaranteed to eventually be able to complete the fast path. This modification would make it harder for malicious clients to intentionally invoke consensus, but on the other hand a more complicated broadcast primitive would be used (two echo rounds instead of one).

Intentional abuse of the slow path can also be addressed through game-theoretic means. Economic incentives, such as fees, can be set up so that intentional consensus invocation is adequately costly for a malicious client. The subject of incentive schemes in blockchain systems is broad, as different aspects of the system's functionality need to be considered depending on the application. It is thus left outside the scope of this paper.

%could potentially be addressed through economic incentives, for example by requesting additional transaction fees from clients when the slow-path is necessary. Such economical and game theoretical approaches are omnipresent in practice, but shall be left outside the scope of this work.

% Other improvements can also be considered. For example, we wager that in a practical system we could allow clients to 

\paragraph{\bf Fast path-only synchronization.}

We presented \apost in the form where the consensus outcome is accepted by the servers without further steps. Consider the following addition to our protocol. Suppose a server $s$ has not acknowledged a transaction $t$ in the fast path, and later $t$ is the result of consensus. Even though $s$ might have acknowledged a conflicting transaction $t'$ in the fast path, let now $s$ broadcast a fast path acknowledgement for $t$. By introducing this rule, we ensure that all honest servers that observe the consensus outcome additionally issue a fast path acknowledgement. Afterwards, all accepted transactions can be determined only following the fast path condition.

In this setup, any records of consensus performed by the system can be forgotten, as any agent synchronizing with the state of the system conveniently only needs to be supplied with the fast path acknowledgements.

\section{Implementation}
\label{implementation}

We implement the \bcoin protocol described in \cref{algorithm:bcoin} by utilizing the core of the \textit{go-ethereum} client for Ethereum. 
%Fig. \ref{fig:implementation} shows an overview of the main logic of our implementation. 
The main modules that are of relevance are briefly described below.

%\begin{figure*}[ht]
%\centering
%\includegraphics[width=0.9\textwidth]{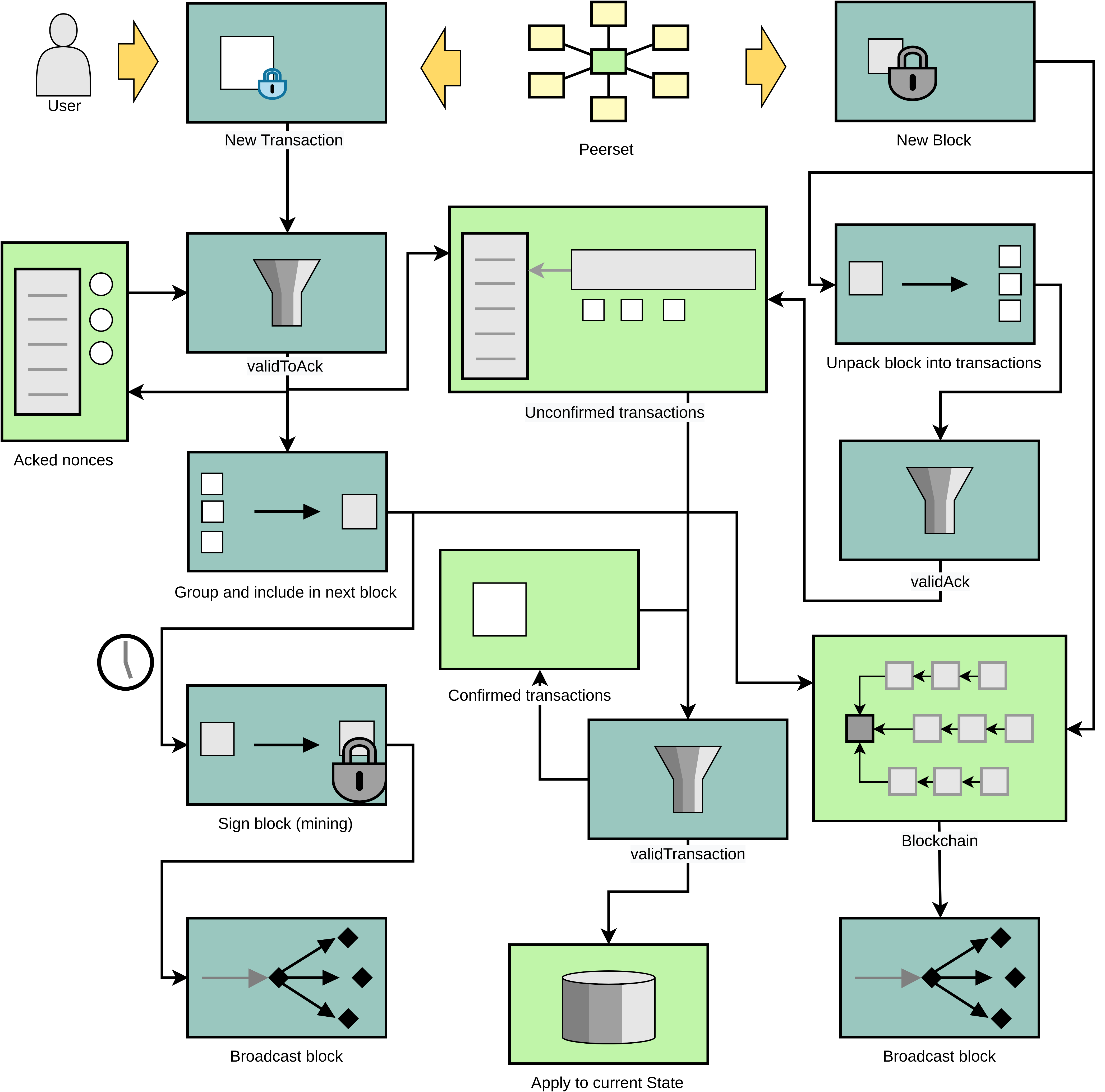}
%\caption{The core pipeline of our server implementation. The lighter rectangles represent data structures, while the dark rectangles are network or computation operations.}
%\label{fig:implementation}
%\end{figure*}

\textbf{Transactions:} There are two types of transactions in Ethereum. We only support transactions that lead to message calls, and do not support transactions that lead to contract creation. Transactions are broadcast using the Ethereum Wire Protocol \cite{ethprotocol} that probabilistically disseminates blocks through gossip with a sample size of $\sqrt{n}$.

\textbf{Blocks:} The fundamental building blocks of Ethereum also lay at the core of our protocol. However, instead of using a single chain of blocks to totally order transactions, blocks are used to broadcast batches of acknowledgments. This is done by including all transactions that should be signed in a block created by the server. The block signature proves the authenticity of all acknowledgments. The \textit{parentHash} field of a block is also kept, in order to refer to the previous block, which allows for easier \textit{synchronization} between servers.

\textbf{Blockchain:} As every server issues its own chain of blocks, we re-purpose the blockchain abstraction to keep track of all chains in a DAG and allow for synchronization with new clients in future extensions.

\textbf{Mining:} We replace the proof-of-work engine with our own logic that determines which transactions from the transaction pool are safe to be acknowledged. Acknowledgements are batched in blocks, signed and broadcast every 5 seconds. 

\textbf{Transaction pool:} The transaction pool module is managing new transactions in Ethereum. Most functions and data structures shown in the pseudocode of \cref{algorithm:bcoin} are closely matching the implementation of this module.

% It handles multiple queues of pending processable and un-processable transactions. (A processable transaction has a sequence number such that it can be applied to the current state). In our implementation we use the transaction pool structure to add most of the logic regarding the acceptance of a transaction, as it conveniently has access to the blockchain, the state and new transactions. 

We complete the implementation of our protocol by enhancing the no-consensus payment system with the \apost algorithm. We do so by plugging in a simple consensus algorithm built on the Ethereum Rinkeby testnet. More specifically, we provide a smart contract that is able to perform consensus for any $(sender, sn)$ instance. The contract terminates either when $f + 1$ equal proposals for $t$ are collected, in which case it immediately accepts $t$. Alternatively, once $2f+ 1$ proposals are collected, the contract accepts the most frequent input. The smart contract is called \textit{Multishot} and its implementation can be found in \cite{githubMultishot2021}, while Appendix \ref{appendix:multi} shows the pseudocode and the correctness proof of this algorithm.

While our algorithm is agnostic to the underlying consensus algorithm used, this simple smart contract allows us to demonstrate the effectiveness of \apost, while keeping our implementation in the Ethereum ecosystem. 

These few modules make up most of the changes that were required for us to leverage a large part of the existing \textit{go-ethereum} infrastructure. This allows us to take advantage of the network discovery protocol \cite{ethprotocol} and the support for hardware wallets. Moreover, our server can simultaneously function as a client, which can be controlled through a variety of interfaces. While the regular JavaScript console can be used, the client can also be addressed via a standard web3 JSON-RPC API accessible through HTTP, WebSockets and Unix Domain Sockets. The complete infrastructure is open source \cite{github2021}.

\bibliographystyle{splncs04}
\bibliography{bibliography}

\newpage

\begin{subappendices}
\renewcommand{\thesection}{\Alph{section}}%

\section{Implementation Details}

We visualize the core pipeline of our implementation in \cref{fig:implementation}. The functions and data structures shown closely match our changes to the \textit{go-ethereum} transaction pool module, where most of the logic regarding the acceptance of a transaction is situated. 

\begin{figure*}[ht]
\centering
\includegraphics[width=0.9\textwidth]{}
\caption{The core pipeline of our server implementation. The lighter rectangles represent data structures, while the dark rectangles are network or computation operations.}
\label{fig:implementation}
\end{figure*}

As an underlying source of consensus an Ethereum smart contract is used. Written in Solidity, its pseudocode for a single instance is described by \cref{algorithm:multishot}.

We show that the contract satisfies consensus under the assumption that the Ethereum blockchain itself does not revert and indeed itself provides consensus.

\textbf{Agreement} is trivially satisfied through the use of a designated leader (smart contract).

\textbf{Termination} is satisfied since if all $n \geq 4f + 1$ honest servers propose, there are guaranteed to be $2f + 1$ proposals, upon which the algorithm terminates.

\textbf{Validity} is satisfied, since if all honest servers propose the same transaction $t$, then every sample of $f + 1$ proposals contains at least one proposal for $t$. Further, no matter what the majority proposal is, either it was proposed more than $f$ times, in which case validity is satisfied through the previous argument, or, more than $f$ proposals were not $t$, in which case two honest servers have proposed different values. Hence, we are guaranteed to satisfy validity by accepting the majority value.

\label{appendix:multi}

\begin{algorithm}[ht]
\begin{algorithmic}[1]
\Implements
    \Instance{Consensus}{con}{\text{ for the (sender, sn) tuple }} % WHy here we use implement and nowhere else?
\EndImplements

\UponPure{con}{Init}
    \State \textit{proposals} := $[n](\bot)$; \algorithmiccomment {Array of size n.}
    \State \textit{accepted} := \texttt{False};
    
\EndUponPure

\Upon{con}{Propose}{u, [sender, sn], t}
    \State \textit{proposals}[u] := t;
\EndUpon

\UponExists{t, sender, sn \neq \bot}{\text{\#}(\{u \in \Pi \mid proposals[u] = t\}) \geq f + 1 \text{ \textbf{and} } accepted = \text{\tt False}}
    \State \textit{accepted} := \text{\tt True};
    \Trigger{con}{Accept}{[sender, sn],t}
\EndUponExists

\UponExists{sender, sn \neq \bot}{\text{\#}(\{u \in \Pi \mid proposals[u] \neq \bot) \geq 2f + 1 \text{ \textbf{and} } accepted = \text{\tt False}}
    \State \textit{accepted} := \text{\tt True};
    \State t := $argmax_{t \in T}(\text{\#}(\{u \in \Pi \mid proposals[u] = t\})$
    \Trigger{con}{Accept}{[sender, sn],t}
\EndUponExists

\end{algorithmic}
\caption{MultishotConsensus}
\label{algorithm:multishot}
\end{algorithm}

\end{subappendices}

\end{document}